%% file: main.tex
\documentclass[conference]{IEEEtran}
\usepackage{cite}
\usepackage{amsmath,amssymb,amsfonts}
\usepackage{amsmath, amsthm}
\usepackage{algorithmic}
\usepackage{graphicx}
\usepackage{textcomp}
\usepackage[dvipsnames]{xcolor}
\usepackage{dsfont}
\usepackage{siunitx}
\usepackage{comment}

\usepackage{pgfplots, pgfplotstable}
\pgfplotsset{compat=1.18}
\usetikzlibrary{intersections, perspective, shapes, positioning, calc, graphs, quotes}
\input{figs/tikz/tikz_definitions}

\begin{document}
\bstctlcite{IEEEexample:BSTcontrol}

\newcommand{\red}{\color{red}}
\newtheorem{proposition}{Proposition}
\newtheorem{definition}{Definition}
\newtheorem{lemma}{Lemma}

\title{Column Generation for the Optimization of Switching in Repeaterless Quantum Networks}

\author{
  Álvaro Troyano Olivas\IEEEauthorrefmark{1}\IEEEauthorrefmark{3}, Andrés Agustí Casado\IEEEauthorrefmark{2}, Hans H. Brunner\IEEEauthorrefmark{3},\\
  Chi-Hang Fred Fung\IEEEauthorrefmark{3}, Momtchil Peev\IEEEauthorrefmark{3}, Laura Ortiz\IEEEauthorrefmark{1}, and Vicente Martin\IEEEauthorrefmark{1}\\
  \IEEEauthorrefmark{1}\textit{Center for Computational Simulation, Universidad Politécnica de Madrid, Madrid, Spain}\\
  \IEEEauthorrefmark{2}\textit{Departamento de Física Teórica, Universidad Autónoma de Madrid, Madrid, Spain}\\
  \IEEEauthorrefmark{3}\textit{Munich Research Center, Huawei Technologies Duesseldorf GmbH, Munich, Germany}\\
  \texttt{alvaro.troyano.olivas@upm.es}
}

\maketitle

\begin{abstract}
Efficient resource allocation and optical switching promise high key rates, network adaptability, and cost reduction in repeaterless quantum communication networks.
However, identifying optimal switching configurations remains a significant challenge due to the combinatorial complexity.

We introduce a novel graph formulation to model the physical and logical structure of repeaterless quantum networks, enabling the systematic optimization of switching strategies.
The problem is posed as a linear program and solved using a column generation approach.
This method enables scalable computation despite the exponential number of possible network configurations.
Our results not only provide a formal foundation but also a practical algorithm for the optimization of switching.
Empirical tests confirm the solver's scalability with network size, demonstrating the framework's effectiveness and laying the groundwork for future optimization of quantum network control.
\end{abstract}

\begin{IEEEkeywords}
Repeaterless quantum communication; QKD; QKD networks; switched QKD; optimization; column generation
\end{IEEEkeywords}

\section{Introduction}

Quantum communication networks are rapidly evolving from laboratory experiments into operational infrastructures that enable unprecedented capabilities,
e.g., secure information transfer through quantum key distribution \cite{qkd}.
Recent testbeds such as MadQCI \cite{MadQCI:_a_heterogeneous_and_scalable_SDN-QKD_network_deployed_in_production_facilities} have demonstrated the feasibility of deploying these networks with flexible management of quantum and classical resources. 

A key element for the performance of such quantum networks is the ability to dynamically switch among key generation paths or configurations to optimize key generation rates, enhance network resilience, and ensure service continuity \cite{Demonstration_of_software_defined_network_services_utilizing_quantum_key_distribution_fully_integrated_with_standard_telecommunication_network, Quantum_key_distribution_based_on_selective_post-processing_in_passive_optical_networks, Demonstration_of_a_switched_CV-QKD_network}.
Architectures like such with quantum reconfigurable optical add-drop multiplexers (q-ROADMs) \cite{q-ROADM} are paving the way for scalable, flexible, and cost-efficient repeaterless quantum networks.
However, determining optimal switching strategies in these dynamic systems remains a significant and unresolved challenge.

In this contribution, we address this gap by developing a mathematical framework based on graph theory.
This framework models the physical structure of a repeaterless quantum network and determines how and when to perform switching of quantum links.
It also allows for the enumeration of all \textit{network configurations}
(i.e. each particular way to allocate the network resources), an unsolved problem, to the best of our knowledge.

Within this framework, we pose the switching optimization as a \textit{linear program} \cite{integer_program}.
To ensure scalability, we devise a solution method based on \textit{column generation}, which efficiently handles the problem's inherent complexity.
This approach contrasts with prior works \cite{Optimizing_key_consumption_in_switched_QKD_networks, Relayed_vs._Switched_QKD:_A_Comparison_in_Non-Uniform_Ring_Networks} that pose the problem as a computationally hard \textit{mixed integer linear program} \cite{integer_program}
and lack a formal modeling foundation.

Our analysis shows that the number of solver iterations required by the proposed algorithm scales roughly quadratically with the size of the network (this is, number of transmitters, receivers, switches and edges).
This quadratic growth is unaffected when the number of switches in the network is fixed.
Linear scaling is achieved when either the number of transmitters or receivers is fixed.
However, each iteration solves an integer subproblem, which is worst case exponential time.

The paper is structured as follows:
Section~\ref{sec:raw} states the problem and defines the mathematical framework.
Section~\ref{sec:linear_model} details the linear optimization model.
Section~\ref{sec:empirical_complexity} presents our scaling analysis,
and Section~\ref{sec:conclusions} concludes with directions for future research.



\section{Problem Setup}
\label{sec:raw}

The problem we address in this work is defined as follows:
Given a reconfigurable (\textit{switched}), repeaterless quantum communication network
consisting of transmitters and receivers with the capability to form various feasible pairs and known average key generation rates,
the objective is to determine the optimal distribution of network configurations over a specified time period.
In this section we establish the mathematical foundation required to formalize this optimization.

\begin{definition}[Raw Network Graph]
    A raw network graph $G = (V, E)$ is a directed graph with $E = \{(u, v): u, v \in V\}$ the set of \textit{arcs} (directed edges) and
    $V = T \cup R \cup S$ consisting of 3 types of nodes (which we will refer to as vertices onward):
    \text{transmitters} ($T$), \text{receivers} ($R$), and \text{switches} ($S$).
    We assume no ingoing edges exist for a transmitter and no outgoing edges for a receiver.
\end{definition}

\begin{figure}
    \centering
    \input{figs/tikz/example}
    \caption{%
      An example of a raw network graph,
      with transmitters shown as empty round vertices, receivers as filled round vertices and switches as square vertices.
      \label{fig:example_raw}}
\end{figure}
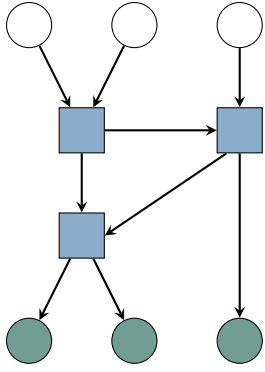

This framework can be extended to directed multigraphs, but without loss of generality and to simplify the notation we will assume the underlying graph is a digraph. For an example of a \textit{raw network graph}, the reader is referred to Fig.~\ref{fig:example_raw}.

\begin{definition}[TR-Path]
    A \text{TR-path} $p$ in a raw network graph $G = (V, E)$, from transmitter $t \in T$ to receiver $r \in R$, is a set of connected edges
    \begin{align}
        p = \{ (t, s_1), (s_1, s_2) \dots (s_n, r) \},
    \end{align}
    where $e_1 = (t, s_1) \in E$, $e_i = (s_{i - 1}, s_i) \in E$ and $e_{n+1} = (s_{n}, r) \in E$.
    
    A TR-path $p$ is a simple path if all vertices in $p$ are distinct. 
    Following path $p$, each vertex will only be visited at most one time, i.e., it is a loop-free path.
\end{definition}

In this work we will assume that the desirable (in the sense of useful for the optimization) paths are simple TR-paths, and hence, the word path will be used to refer to a simple TR-path.
For an example of a simple path, the reader is referred to Fig.~\ref{fig:path}.

\begin{figure}
    \centering
    \input{figs/tikz/simple_path}
    \caption{%
      A simple path in an example raw network graph, shown as orange edges.%
      \label{fig:path}}
\end{figure}
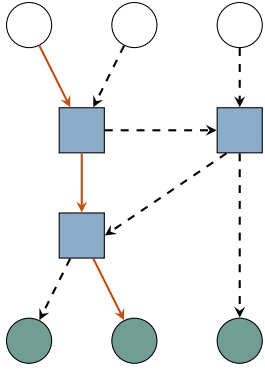

\begin{definition}[Link and Set of all Links]
  A transmitter-receiver pair is realizable if a TR-path exists that connects them.
  Such a realizable pair might also be called link.
  The set $L$ is the set of all realizable transmitter-receiver pairs in the raw network graph.
\end{definition}

In the context of the present problem, we need to define network configurations, since the optimization concerns them directly.

\begin{definition}[Network Configuration]
\label{def:network_config}
    Let $G = (T \cup S \cup R, E)$ be a raw network graph. A network configuration is a set $\tilde M$ of pairwise edge disjoint simple TR-paths in $G$ such that
    \begin{enumerate}
        \item For every transmitter $t \in T$, at most a single path $p \in \tilde M$ contains an edge with $t$.
        \item For every receiver $r \in R$, at most a single path $p \in \tilde{M}$ contains an edge with $r$.
    \end{enumerate}
\end{definition}

Note that a configuration is exactly a flow in the raw network graph $G$ setting each edge in $G$ to have unit capacity, and considering nodes in $T$ to be sources and nodes in $R$ to be sinks.

\begin{proposition}[Network Configuration]
\label{prop:network_configs}
    A \textit{network configuration} in the raw network graph $G = (T \cup R \cup S, E)$ is a mapping 
    \begin{align}
        M: E \rightarrow \{0, 1\}
    \end{align}
    such that, 
    \begin{enumerate}
        \item For every $t \in T$:
        \begin{equation}
            \label{eq:net_conf1}
            -1 \leq \sum_{(u, t) \in E} M(u, t) - \sum_{(t, v) \in E} M(t, v) \leq 0
        \end{equation}

        \item For every $r \in R$:
        \begin{equation}
            \label{eq:net_conf2}
            0 \leq \sum_{(u, r) \in E} M(u, r) - \sum_{(r, v) \in E} M(r, v) \leq 1
        \end{equation}

        \item For every $s \in S$:
            \begin{equation}
                \label{eq:net_conf3}
                \sum_{(u, s) \in E} M(u, s) - \sum_{(s, v) \in E} M(s, v) = 0
            \end{equation}
    \end{enumerate}    
    This corresponds to a unit capacity flow in graph $G$.
\end{proposition}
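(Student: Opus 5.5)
The plan is to prove the two directions of the correspondence separately: a configuration $\tilde M$ in the sense of Definition~\ref{def:network_config} gives a map $M$ satisfying \eqref{eq:net_conf1}--\eqref{eq:net_conf3}, and conversely. Throughout I will use the structural assumptions of the raw network graph. Transmitters have no ingoing arcs and receivers have no outgoing arcs. Hence for $t\in T$ the expression in \eqref{eq:net_conf1} reduces to $-\sum_{(t,v)\in E}M(t,v)$, and for $r\in R$ the expression in \eqref{eq:net_conf2} reduces to $\sum_{(u,r)\in E}M(u,r)$. So \eqref{eq:net_conf1} says exactly that at most one outgoing arc of $t$ is used, and \eqref{eq:net_conf2} says that at most one ingoing arc of $r$ is used.

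\emph{From paths to maps.} Given $\tilde M$, I will set $M(e)=1$ if $e$ lies on some $p\in\tilde M$ and $M(e)=0$ otherwise. Pairwise edge-disjointness makes $M$ well defined as a $\{0,1\}$-valued indicator, and it makes the sum over the union split as a sum of per-path contributions. For a single simple TR-path, every switch on it is entered exactly once and left exactly once, by simplicity. Its start $t$ has balance $-1$, its end $r$ has balance $+1$, and every other vertex has balance $0$. Summing over paths gives \eqref{eq:net_conf3} immediately. Conditions 1 and 2 of Definition~\ref{def:network_config} say that each $t$ and each $r$ is an endpoint of at most one path, and this yields \eqref{eq:net_conf1} and \eqref{eq:net_conf2}.

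\emph{From maps to paths.} This is a flow decomposition argument. Let $E_M=\{e: M(e)=1\}$. For each $t$ with an arc $(t,s_1)\in E_M$, I will follow a walk along arcs of $E_M$ that have not yet been used. Whenever the walk enters a switch $s$, conservation \eqref{eq:net_conf3} guarantees that $s$ still has an unused outgoing arc in $E_M$, since the walk has so far consumed equally many ingoing and outgoing arcs of $s$ apart from the one just entered. The walk cannot return to $t$, which has no ingoing arcs, and $E_M$ is finite, so the walk must terminate at a receiver $r$. Next I will shortcut any repeated vertices to obtain a simple TR-path, and remove its arcs from $E_M$. Each removal changes balances only at $t$ and $r$, so conservation at switches is preserved and the procedure can be iterated. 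The paths obtained are edge-disjoint by construction. They also satisfy conditions 1 and 2, because \eqref{eq:net_conf1} and \eqref{eq:net_conf2} allow at most one used arc at each $t$ and each $r$.

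\emph{Main obstacle: residual cycles.} The delicate step is what remains after the decomposition. This consists of the cycles cut out when shortcutting, together with any closed circulations in $E_M$ lying entirely inside $S$. These satisfy \eqref{eq:net_conf3} but belong to no TR-path. I would handle this by stating the correspondence up to such zero-balance switch cycles. Concretely, I would show that $M$ and $M$ minus the residual cycles have the same set of realized links $(t,r)$, so they represent the same configuration for the purposes of the optimization. Alternatively, I would restrict to maps $M$ whose support is acyclic, and then the decomposition is exact. With that caveat made explicit, the two constructions above give the claimed identification of network configurations with unit-capacity flows in $G$.
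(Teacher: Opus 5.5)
Your two directions coincide with the paper's proof. Its $(\Leftarrow)$ direction is your indicator map. Its $(\Rightarrow)$ direction makes the same simplification of \eqref{eq:net_conf1} and \eqref{eq:net_conf2}, using that $T$ has no in-arcs and $R$ has no out-arcs. It then cites \eqref{eq:net_conf3} to say that a path entering a switch must leave it. Your walk-and-remove decomposition is a rigorous version of that last sentence. You are also right that the paper overlooks switch-only circulations. For example, $M$ equal to $1$ exactly on $(s_1,s_2)$ and $(s_2,s_1)$ satisfies \eqref{eq:net_conf1}--\eqref{eq:net_conf3}, yet it is the indicator of no configuration.

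The gap is in your repair. You assume that $M$ determines ``the set of realized links'', and you claim that the decomposition is exact when the support is acyclic. Neither holds, because the decomposition is not unique.

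\emph{Non-uniqueness.} Take $E=\{(t_1,s),(t_2,s),(s,r_1),(s,r_2)\}$. The configurations $\{t_1\to s\to r_1,\ t_2\to s\to r_2\}$ and $\{t_1\to s\to r_2,\ t_2\to s\to r_1\}$ are both valid, since edge-disjoint paths may share the switch $s$. Both have indicator $M\equiv 1$, whose support is acyclic. The out-arc your walk chooses at $s$ decides which links you obtain, and hence which weights $\omega_{P(M,l)}$.

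\emph{The acyclic restriction loses configurations.} $\{t\to s_1\to s_2\to r,\ t'\to s_2\to s_1\to r'\}$ is a valid configuration whose support contains the cycle $s_1\to s_2\to s_1$. On this $M$ your greedy walk may instead return $t\to s_1\to r'$ and $t'\to s_2\to r$, plus a leftover cycle. So the residual is not canonical either.

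What your argument does establish is the following:
\begin{enumerate}
\item every configuration's indicator satisfies \eqref{eq:net_conf1}--\eqref{eq:net_conf3};
\item every such $M$ dominates the indicator of some configuration, and the difference is a circulation on $S$;
\item every such $M$ with acyclic support is exactly such an indicator.
\end{enumerate}
The map from configurations to flows is not injective, so the proposition cannot be read as an identification. These three facts are the correct statement.

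The paper's proof has the same defect. In the first example its set $\tilde M$ of all support TR-paths contains all four paths $t_i\to s\to r_j$, so it is not even a configuration. The defect is harmless for the algorithm, because the pricing problem works with path variables $y_p$ rather than with $M$.
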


A proof for proposition \ref{prop:network_configs} can be found in appendix \ref{app:proof}.


\subsection{Weighted Key Capacity as Path Weights}

Given a raw network graph $G = (V, E)$, we will allow every edge $e \in E$ to have a channel attenuation $\alpha_e$ in decibel.
The secret key capacity of a path can then be calculated with the rate-loss scaling of repeaterless quantum communication defined in \cite{Fundamental_limits_of_repeaterless_quantum_communications}.
We will use a weighted secret key capacity as weight $\omega_{P(M, l)}$ of the path for link $l$ in configuration $M$:
\begin{align}
  \alpha_{P(M, l)} &= \sum_{e \in P(M, l)} \alpha_e,\\
  \omega_{P(M, l)} &= \begin{cases}
      - c_l \log_2\left(1 - 10^{-\frac{\alpha_{P(M, l)}}{10}}\right), \text{ if } P(M, l) \neq \emptyset \\
      0, \text{ if } P(M, l) = \emptyset.
  \end{cases}
\end{align}
where $P(M, l)$ is a function that gives the path in configuration $M$ that connects the transmitter-receiver pair of link $l$.
$c_l\ge0$ is used as an additional weight to reflect the priority of a link \cite{Optimizing_key_consumption_in_switched_QKD_networks}.
In this contribution, the attenuation of the edges, the secret key capacity, and the link weights are assumed to be known constants.

\subsection{Link Buffers and Time Evolution}

Each link in the network ($l \in L$) has a finite buffer to store generated keys.
This buffering allows keys to be consumed on demand, even when a link is not part of the active configuration.
To prevent any buffer from depletion, the system must switch between different network configurations.
This design effectively decouples the immediate demand for keys from the slower, time-consuming process of network reconfiguration. 

Key forwarding can establish end-to-end keys between any pair of nodes in trusted-node networks.
Finding key-forwarding routes is a separate problem outside the scope of this work.
We treat any forwarded key as a consumption demand on the individual links along its route.

For our analysis, we normalize all parameters relative to the total duration of a configuration sequence.
We assume the buffers are sufficiently large and initially filled such that a calculated non-negative average key gain guarantees a non-negative gain over the evolution and prevents buffers from depletion.
This allows us to analyze system behavior without tracking precise buffer levels over time.

Although key generation pauses during switching, we consider the time required to switch to be negligible. 
This assumption is justified because sufficiently large buffers allow very low switching frequencies (e.g, no more than once per hour). 

\section{Linear Model}
\label{sec:linear_model}

We formulate the optimization problem using a linear program based on  the raw network graph model.
Over a given finite time period, we seek to determine the fractions of time $\lambda_M$ that the network should operate in each configuration $M \in \mathfrak{M}$ to maximize a chosen objective, where $\mathfrak{M}$ denotes the set of all possible network configurations.
These fractions need to sum to one: $\sum_{M \in \mathfrak{M}} \lambda_M = 1$,
where $\sum_{M \in \mathfrak{M}} (\bullet)$ denotes summing over all possible values of $M$ (so all possible configurations).

We define the total amount of weighted key generation rate in link $l \in L$ as 
\begin{equation}
    b_l = \sum_{M \in \{M \in \mathfrak{M}: l \in M\}} \lambda_M \omega_{P(M, l)},
\end{equation}
where $l \in  M$ denotes link $l$ being realized by configuration $M$.

Among various possible objective functions, we adopt the max-min optimization strategy to maximize the minimum weighted key generation rate across all links:
\begin{equation}
    \max_{\lambda_M \geq 0,\; \sum_{M \in \mathfrak{M}} \lambda_M = 1}\quad \min_{l \in L} \sum_{M \in \{M \in \mathfrak{M}: l \in M\}} \lambda_M \omega_{P(M, l)}.
\end{equation}

We can linearize this objective by introducing an auxiliary variable $k$, which serves as a lower bound for the weighted per-link generation rate.
The resulting linear program is:
\begin{align}
    \label{prob:linear}\max_{\lambda_M \geq 0} \quad & k \\
    \text{s.t.} \quad
    & \sum_{M \in \{M \in \mathfrak{M}: l \in M\}} \omega_{P(M, l)} \lambda_M \geq k \quad \forall \, l \in L, \nonumber \\
    & \sum_{M} \lambda_M = 1.  \nonumber
\end{align}
In regard to the physical meaning, $k$ can be viewed as a measure of the maximum weighted per-link consumption rate supported by the network for a given strategy \cite{Optimizing_key_consumption_in_switched_QKD_networks}.

\subsection{Complexity of Column Enumeration}

The reader should note that this problem scales poorly, due to the exponential number of configurations ($\lambda_M$ variables).
Intuitively, this is because a configuration is essentially a set of edge disjoint paths.
The number of configurations is lower bounded by the number of paths themselves, because each path alone forms a network configurations, respectively.
The number of simple paths in a graph is, in the worst case, exponential in the number of edges ($\mathcal{O}(\exp(|E|))$).
In the worst case, the number of configurations is the number of subsets of the set of all simple paths, the size of its power set ($2^{\beta\exp(|E|)}$ with $\beta\exp(|E|)$ simple paths).
This number is an upper bound in networks with active edge-disjointness requirements, 
but the exponential relationship still establishes that the search space is too large for a brute-force solution approach.

Acknowledging that some configurations supersede other configurations might be a good way to reduce the search space.
However, this does not necessarily reduce the computational complexity of identifying the relevant configurations, since still, an exponential amount could exist.


\subsection{Solving the Problem}

The exponential number of variables in problem~\eqref{prob:linear} makes a full enumeration of all constraint matrix columns computationally infeasible.
We therefore employ \textit{column generation} \cite{integer_program}, an algorithm designed to tractably solve such large-scale linear programs.

The method operates by iteratively solving a \textit{restricted master problem} (RMP), which is the original problem posed on only a subset of variables or columns.
We define this restricted set of columns (configurations) as $\mathcal{M} \subset \mathfrak{M}$.
In each iteration, a \textit{pricing problem} is solved to identify a new column outside of $\mathcal{M}$ to be added to the RMP.
The selection of the new column is based on the \emph{reduced cost}, the benefit of adding this variable to the current pseudo optimal basis.

Column generation is an exact algorithm, guaranteeing convergence to the optimal solution.
However, it provides no a priori bound on the number of iterations it requires to converge to this optimum.

The RMP is given by
\begin{align}
    \label{prob:rmp}\max_{\lambda_M \geq 0} \quad &k \\
    \text{s.t.} \quad
    & \label{cons:gen_rate} \sum_{M \in \{M \in \mathcal{M}: l \in M\}} \omega_{P(M, l)} \lambda_M \geq k \quad \forall \, l \in L, \\
    & \label{cons:convexity} \sum_{M \in \mathcal{M}} \lambda_M = 1. 
\end{align}

To leverage the column generation framework, we need the dual of the original RMP.
By associating dual variables $\mu_l$, $l \in L$, to constraints~\eqref{cons:gen_rate}
and $\gamma$ to the convexity constraint~\eqref{cons:convexity},
the dual program can be found as 
\begin{align}
    \min_{\mu_l \geq 0} \quad &\gamma \\
    \text{s.t.} \quad
    & \label{cons:config} \sum_{l \in M} \omega_{P(M, l)} \mu_l \leq \gamma \quad \forall \, M \in \mathcal{M}, \\
    & \sum_{l \in L} \mu_l \geq 1. 
\end{align}

To solve problem~\eqref{prob:rmp} with column generation, we need a \textit{pricing problem} that at each iteration gives the column (or new configuration, not in the RMP) with the most positive reduced cost.
For this, we want to find the configuration that maximally violates constraint~\eqref{cons:config}.
In essence, we would like to solve
\begin{align}
    \max_{M} \quad& \sum_{l \in M} \omega_{P(M, l)} \mu_l \\
    \text{s.t.} \quad
    &\label{cons:vio} \sum_{l \in M} \omega_{P(M, l)} \mu_l >  \gamma.
\end{align}

Constraint~\eqref{cons:vio} is in place because adding a column to the RMP with negative or zero reduced cost would not change the optimal solution, and hence, stall our iterative column generation.
If the pricing problem is unfeasible, we will stop as we have found the optimum.

\subsubsection{Pricing Problem Derivation}

A possible formulation for the pricing problem is the following one.
Since we can think of a configuration as an edge disjoint path set, we can encode a configuration using some binary variables $y_p \in \{0, 1\}$ where $p$ is a path connecting the transmitter-receiver pair $(t,r)$ of link $l$.
Its value is 1 if the path is part of the configuration, 0 otherwise.
However, we need to enforce several constraints to encode a compliant network configuration;
let $P$ be the set of all paths, $P^{(t,r)}$ the set of all paths connecting $t$ to $r$, and $\mu_p = \mu_l$, $\forall p \in P^{(t,r)}$: 
\begin{itemize}
    \item Maximum transmitter usage: \\
    A single transmitter $t \in T$ should only be used once.
    This can be enforced as
    \begin{equation}
        \sum_{r \in R} \sum_{p \in P^{(t,r)}} y_p \leq 1, \quad \forall \, t \in T.
    \end{equation}

    \item Maximum receiver usage: \\
    A completely analogous argument follows for receivers,
    \begin{equation}
        \sum_{t \in T} \sum_{p \in P^{(t,r)}} y_p \leq 1, \quad \forall \, r \in R.
    \end{equation}

    \item Maximum fiber (edge) usage:
    Equivalently, we can only use each fiber channel (directed edge) in the raw network graph at most once,
    \begin{equation}
        \sum_{p \in \left\{p \in P: e \in p\right\}} y_p \leq 1, \quad \forall \, e \in E.
    \end{equation}
\end{itemize}

Hence, the full problem can be formulated as
\begin{align}
    \label{path_pricing_milp} \max_{y_p \in \{0, 1\}} \quad & \sum_{p \in P} \omega_p \mu_{p} y_p \\
    \text{s.t.} \quad
    & \sum_{r \in R} \sum_{p \in P^{(t,r)}} y_p \leq 1, \quad \forall \, t \in T, \notag\\
    & \sum_{t \in T} \sum_{p \in P^{(t,r)}} y_p \leq 1, \quad \forall \, r \in R, \notag\\
    & \sum_{p \in \left\{p \in P: e \in p\right\}} y_p \leq 1, \quad \forall \, e \in E. \notag
\end{align}

\section{Empirical Complexity Analysis}
\label{sec:empirical_complexity}

In this section, we present an empirical analysis of the optimization problem posed in Equation~\eqref{prob:rmp}.
Our methodology involves randomly generating raw network graphs, solving the optimization for each, and recording the number of iterations required by the column generation algorithm.
We analyze the results under three distinct scenarios: full network growth, growth with a fixed number of receivers, and growth with a fixed number of switches.

Fig.~\ref{fig:total_growth} displays the average iteration count as the entire network scales (i.e., the number of transmitters, receivers and switches all increase).
The relationship exhibits non-linear growth.
To understand this behavior, we consider the other two scenarios.

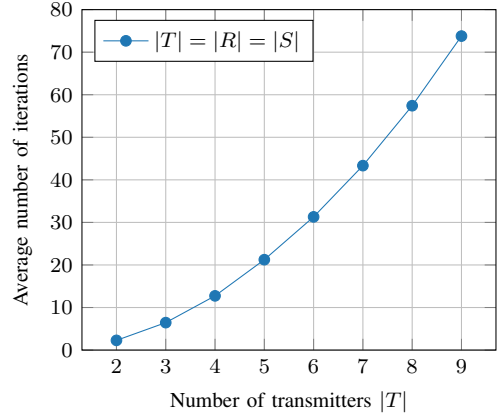
\begin{figure}
    \centering
    \input{figs/tikz/total_growth}
    \caption{%
      Mean number of iterations required to solve the optimization for random raw network graphs as a function of network size.
      The number of transmitters equals the number of receivers and switches.
      Each data point represents the average over 15 graph instances.%
      \label{fig:total_growth}}
\end{figure}

When the number of receivers is held constant (Fig.~\ref{fig:fixed_receivers}), the number of iterations scales linearly with the number of transmitters and switches.
This linear trend suggests that the non-linearity in the full-growth case is not inherent to the algorithm's core mechanics.
We hypothesize that it stems from the combinatorial increase in the number of possible transmitter-receiver pairs, which scales quadratically as $|T|\cdot|R|$.

\begin{figure}
    \centering
    \input{figs/tikz/fixed_receivers}
    \caption{%
      Mean number of iterations required to solve the optimization for random raw network graphs as a function of network size,
      while the number of receivers is held fixed at 5.
      Each data point represents the average over 15 graph instances.%
      \label{fig:fixed_receivers}}
\end{figure}
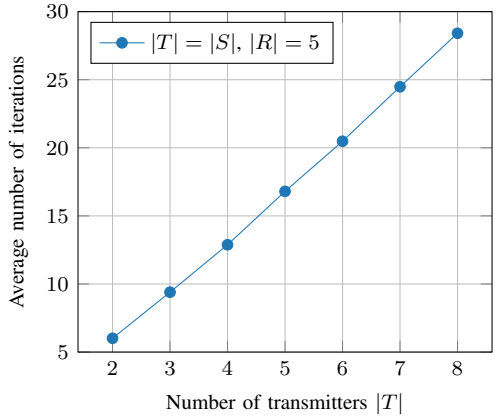

This hypothesis is further supported by the results in Fig.~\ref{fig:fixed_switches},
which shows non-linear growth when the number of switches is constant but the number of transmitters and receivers increases.
As we previously mentioned, given the linear growth when the number of receivers is constant,
this non-linear growth could be explained by the quadratic increase in the possible number of transmitter-receiver pairs.

\begin{figure}
    \centering
    \input{figs/tikz/fixed_switches}
    \caption{%
      Mean number of iterations required to solve the optimization for random raw network graphs as a function of network size,
      while the number of switches is held fixed at 5.
      Each data point represents the average over 15 graph instances.%
      \label{fig:fixed_switches}}
\end{figure}
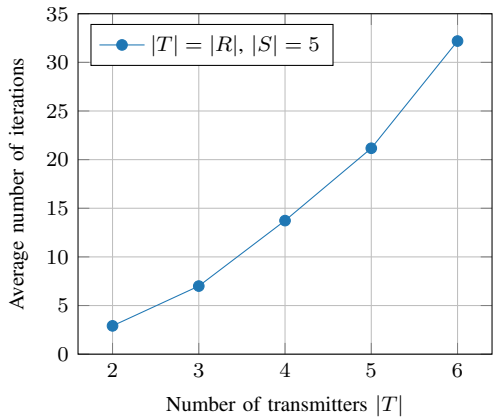

Although the number of column-generation iterations is well-behaved, each iteration requires solving the pricing problem given in Equation~\eqref{path_pricing_milp}.
This is a constrained binary optimization problem that is, in the worst case, exponential to solve in both time and computational steps.
Efficient exact methods and fast heuristics for this subproblem exist, but their consideration falls beyond the scope of this contribution.


\section{Conclusions}
\label{sec:conclusions}

In this work, we developed a mathematical optimization framework to characterize the switching configurations of a repeaterless quantum network.
Our results demonstrate that the column generation approach is a powerful method for optimizing these networks,
effectively solving a linear program with an exponential number of variables. 
While the pricing problem inside of the column generation is computationally expensive,
solving this problem is substantially more tractable than a full enumeration of all possible network configurations.


Future work should address the exploration of heuristic approaches for solving the pricing problem.
A fast approximation could dramatically accelerate convergence, making the approach viable for industrial protocols.
Additionally, time-dependent, statistical, and stochastic consumption and generation models should be investigated as well as the actual filling of the link buffers.
A more realistic model of switching time between configurations should be integrated.
Moreover, extending the mathematical framework to accommodate advanced protocols,
such as multi-party key distribution \cite{supporting_bob1,supporting_bob2},
represents a significant opportunity for future research.

\section{Acknowledgements}

The authors would like to thank projects EU Horizon Europe project ``Quantum Secure Networks Partnership" (QSNP), grant 101114043; the project ``MADQuantum-CM", funded by Comunidad de Madrid (Programa de Acciones Complementarias) and by the Recovery, Transformation and Resilience Plan—Funded by the European Union—NextGenerationEU (PRTRC17.I1) and the project ``CAM Programa TEC-2024/COM-84 QUITEMAD-CM".

\begin{appendices}

\section{Proof for Proposition 1}
\label{app:proof}

We prove that a mapping $M: E \rightarrow \{0, 1\}$ that verifies equations \eqref{eq:net_conf1}, \eqref{eq:net_conf2} and \eqref{eq:net_conf3} is a network configuration
if and only if the set $\tilde{M} = \{ \{(t, s_1), (s_1, s_2), \dots (s_k, r): M(t, s_1) = M(s_{i-1}, s_i) = M(s_k, r) = 1\}, \forall \, t \in T, r \in R\}$ is congruent with definition \ref{def:network_config}.

\begin{proof}
    Let $G = (T \cup R \cup S, E)$ be a raw network graph.

    \vspace{\baselineskip}

    $(\implies)$ Let $M: E \rightarrow \{0, 1\}$ be a mapping in $G$ such that it verifies equations \eqref{eq:net_conf1}, \eqref{eq:net_conf2} and \eqref{eq:net_conf3}.

    Let $t \in T$.
    Since no ingoing edges exist for a transmitter, $\sum_{(u, t) \in E} M(u, t) = 0$, Equation~\eqref{eq:net_conf1} simplifies to
    \begin{equation}
        -1 \leq - \sum_{(t, v) \in E} M(t, v) \leq 0.
    \end{equation}
    So at most a single $(t, v)$ edge can exist in $\tilde M$ (for a particular $t \in T$, and some other node $v \in S \cup R$).

    Furthermore, let $r \in R$.
    Again, since no outgoing edges exist for a receiver, $\sum_{(r, v) \in E} M(r, v) = 0$, Equation~\eqref{eq:net_conf2} simplifies to
    \begin{equation}
        0 \leq \sum_{(u, r) \in E} M(u, r) \leq 1.
    \end{equation}
    Hence at most a single $(u, r)$ can exist in $\tilde M$ (for a particular $r \in R$ and some node $u \in T \cup S$).
    
    Finally, Equation~\eqref{eq:net_conf3} 
    ensures that if a path reaches a switch on a dedicated edge, it must leave that same switch on a dedicated edge.
    Only transmitters can be sources and only receivers can be sinks of a flow in $G$.
    Every flow starts with a dedicated edge and, therefore, can only follow dedicated edges.

    This forms a set of edge disjoint simple paths, such that a transmitter can only pertain in a single one of them (equivalently for receivers).

    \vspace{\baselineskip}

    $(\impliedby)$ Let $\tilde M$ be a network configuration.
    Construct the mapping $M: E \rightarrow \{0, 1\}$ where $M(e) = 1 \iff e \in p$ for some $p \in \tilde M$.

    Then, for every $t \in T$, at most one path has $t$ as its source and no path has $t$ as its sink, so Equation~\eqref{eq:net_conf1}
    holds.
    Similarly, for every $r \in R$, at most one path contains $r$ as its target and no path contains $r$ as source, so Equation~\eqref{eq:net_conf2}
    holds.
    Finally, for every $s$ that some path contains, Equation~\eqref{eq:net_conf3} holds
    because a path is a set of dedicated and connected edges (and so, if a node in $S$ is reached, it must be left).
\end{proof}
\end{appendices}

\bibliographystyle{IEEEtran}
\bibliography{bibliography}

\end{document}

%% file: figs/tikz/tikz_definitions.tex
\definecolor{myblue}{RGB}{140, 173, 202}
\definecolor{mygreen}{RGB}{113, 157, 146}
\definecolor{myotherblue}{RGB}{31, 119, 180}

\tikzset{
  vertex/.style={
    draw,
    circle,
    minimum height=0.6cm,
    minimum width=0.6cm,
  },
  tx/.style={
    vertex,
  },
  rx/.style={
    vertex,
    fill=mygreen,
  },
  sw/.style={
    vertex,
    rectangle,
    fill=myblue,
  },
  connect/.style={
    thick,
    -stealth,
  },
  hinted/.style={
    connect,
    dashed,
  },
  simple/.style={
    connect,
    Bittersweet,
  },
}

%% file: figs/tikz/example.tex
\begin{tikzpicture}[scale=0.7]
  \node[tx] at (0, 0) (tx1) {};
  \node[tx] at (2, 0) (tx2) {};
  \node[tx] at (4, 0) (tx3) {};
  \node[sw] at (1,-2) (sw1) {};
  \node[sw] at (4,-2) (sw2) {};
  \node[sw] at (1,-4) (sw3) {};
  \node[rx] at (0,-6) (rx1) {};
  \node[rx] at (2,-6) (rx2) {};
  \node[rx] at (4,-6) (rx3) {};

  \draw[connect] (tx1) -- (sw1);
  \draw[connect] (tx2) -- (sw1);
  \draw[connect] (tx3) -- (sw2);

  \draw[connect] (sw1) -- (sw2);
  \draw[connect] (sw1) -- (sw3);
  \draw[connect] (sw2.240) -- (sw3.east);

  \draw[connect] (sw3) -- (rx1);
  \draw[connect] (sw3) -- (rx2);
  \draw[connect] (sw2) -- (rx3);
\end{tikzpicture}

%% file: figs/tikz/simple_path.tex
\begin{tikzpicture}[scale=0.7]
  \node[tx] at (0, 0) (tx1) {};
  \node[tx] at (2, 0) (tx2) {};
  \node[tx] at (4, 0) (tx3) {};
  \node[sw] at (1,-2) (sw1) {};
  \node[sw] at (4,-2) (sw2) {};
  \node[sw] at (1,-4) (sw3) {};
  \node[rx] at (0,-6) (rx1) {};
  \node[rx] at (2,-6) (rx2) {};
  \node[rx] at (4,-6) (rx3) {};

  \draw[simple] (tx1) -- (sw1);
  \draw[hinted] (tx2) -- (sw1);
  \draw[hinted] (tx3) -- (sw2);

  \draw[hinted] (sw1) -- (sw2);
  \draw[simple] (sw1) -- (sw3);
  \draw[hinted] (sw2.240) -- (sw3.east);

  \draw[hinted] (sw3) -- (rx1);
  \draw[simple] (sw3) -- (rx2);
  \draw[hinted] (sw2) -- (rx3);
\end{tikzpicture}

%% file: figs/tikz/total_growth.tex
\begin{tikzpicture}[font=\footnotesize]
  \begin{axis}[
      width=0.8\linewidth,
      xtick distance = 1,
      ytick distance = 10,
      grid = major,
      legend pos=north west,
      xlabel = {Number of transmitters $|T|$},
      ylabel = Average number of iterations,
      ymin=0, ymax=80,
    ]
    \addplot[myotherblue, mark=*] coordinates {
      (2,   2.28)
      (3,   6.45)
      (4,  12.74)
      (5,  21.24)
      (6,  31.30)
      (7,  43.34)
      (8,  57.42)
      (9,  73.78)
    };
    \addlegendentry{$|T|=|R|=|S|$}
  \end{axis}
\end{tikzpicture}

%% file: figs/tikz/fixed_receivers.tex
\begin{tikzpicture}[font=\footnotesize]
  \begin{axis}[
      width=0.8\linewidth,
      xtick distance = 1,
      ytick distance = 5,
      grid = major,
      legend pos = north west,
      xlabel = {Number of transmitters $|T|$},
      ylabel = Average number of iterations,
      ymin=5, ymax=30,
    ]
    \addplot[myotherblue, mark=*] coordinates {
      (2, 6.01)
      (3, 9.40)
      (4, 12.88)
      (5, 16.80)
      (6, 20.48)
      (7, 24.48)
      (8, 28.41)
    };
    \addlegendentry{$|T|=|S|$, $|R|=5$}
  \end{axis}
\end{tikzpicture}

%% file: figs/tikz/fixed_switches.tex
\begin{tikzpicture}[font=\footnotesize]
  \begin{axis}[
      width=0.8\linewidth,
      xtick distance = 1,
      ytick distance = 5,
      grid = major,
      legend pos = north west,
      xlabel = {Number of transmitters $|T|$},
      ylabel = Average number of iterations,
      ymin=0, ymax=35
    ]
    \addplot[myotherblue, mark=*] coordinates {
      (2, 2.91)
      (3, 7.00)
      (4, 13.73)
      (5, 21.17)
      (6, 32.19)
    };
    \addlegendentry{$|T|=|R|$, $|S|=5$}
  \end{axis}
\end{tikzpicture}